\DeclarePairedDelimiterX\braket[2]{\langle}{\rangle}{#1 \delimsize\vert #2}
\newcommand{\dif}{\mathrm{d}}
\newcommand{\Eqref}[1]{(\ref{#1})}
\newcommand{\half}{\frac{1}{2}}
\newcommand{\brac}[1]{\left(#1 \right)}
\newcommand{\sbrac}[1]{\left[#1\right]}
\numberwithin{equation}{section}
\newtheorem{mythm}{Theorem}
\begin{document}

\title{Black hole thermodynamic free energy as $A$-discriminants}

\author[1]{Mounir Nisse\footnote{mounir.nisse@gmail.com, mounir.nisse@xmu.edu.my}}

\author[1]{Yen-Kheng Lim\footnote{yenkheng.lim@gmail.com, yenkheng.lim@xmu.edu.my}}

\author{Linus Chang\footnote{llhc2@srcf.net}}

\affil[1]{\normalsize{\textit{School of Mathematics and Physics, Xiamen University Malaysia, 43900 Sepang, Malaysia}}}

\date{\normalsize{\today}}
\maketitle 
 
\renewcommand\Authands{ and }
\begin{abstract}
 We show that the free energy $\mathcal{F}$ and temperature $T$ of black holes, considered as a thermodynamic system, can be viewed as an $A$-discriminant of an appropriately-defined polynomial. As such, mathematical results about $A$-discriminants may lead to implications about black hole thermodynamics. In particular, for static spacetimes with spherical, planar, or hyperbolic symmetry, the number of distinct thermodynamic phases depend on the number of distinct terms in the metric component $g_{tt}$. We prove that if $g_{tt}$ consists of $N_f$ distinct terms, then the $\mathcal{F}$-$T$ curve consists of $N_f-2$ cusps, which in turn leads to $N_f-1$ distinct thermodynamic phases. This result is applied to explicit examples of the Schwarzschild-AdS, Reissner--Nordstr\"{o}m, power-law Maxwell, and Euler--Heisenberg black holes.
\end{abstract}

\section{Introduction}

Various developments in past decades have demonstrated that black holes behave like thermodynamic systems. One such behaviour is that of phase transitions. In particular, Hawking and Page \cite{Hawking:1982dh} have shown that for a given range of temperatures, two possible phases of asymptotically Anti-de Sitter (AdS) black holes can exist. These two phases are typically called the \emph{large} and \emph{small} black hole branches due to their different horizon radii.  Furthermore, at temperatures higher than some critical value, the state of thermal AdS with no black hole is the thermodynamically favoured phase.

The Hawking--Page phase transition is of particular interest in the context of AdS/CFT duality, as the interpretation in the dual theory describes a confinement/deconfinement phase transition \cite{Witten:1998zw}. Since the works of Hawking and Page, similar thermodynamic analyses have been extended to various other spacetimes. For instance, the free energy $\mathcal{F}$ vs temperature $T$ curve for the the Reissner--Nordstr\"{o}m has the distinctive swallowtail feature similar to Van der Waals fluids \cite{Chamblin:1999tk,Chamblin:1999hg}. The analogy to Van der Waals have been completed by viewing the negative cosmological constant as a thermodynamic pressure \cite{Teitelboim:1985dp,Kastor:2009wy,Kubiznak:2012wp}. This viewpoint has be applied to black hole thermodynamics in other theories of gravity, such as Lovelock gravity \cite{Kastor:2010gq,Kastor:2011qp,Kastor:2016bph}. (For a review, see, e.g., \cite{Kubiznak:2016qmn,Kubiznak:2014zwa}.) This approach can be further applied to holography \cite{Kastor:2014dra} and further extended to scalar fields \cite{Kastor:2018cqc}. Now, the swallowtail curve of the free energy can be viewed as a curve with two cusps (i.e., singular points). In the Reissner--Nordstr\"{o}m case, these two cusps divide the curve into three parts, which we take to be distinct thermodynamic phases. 

In the same way, the $\mathcal{F}$ vs $T$ curve for the previously-mentioned Schwarzschild-AdS black hole consists of a single cusp which separates the `large' and `small' black hole branches. On the other hand, the $\mathcal{F}$ vs $T$ curve for the Euler--Heisenberg black holes \cite{Magos:2020ykt} may have up to three cusps, subdividing the curve into four distinct phases. If one consideres matter fields coming from general non-linear electrodynamics, the corresponding curves can have even higher number of cusps \cite{Tavakoli:2022kmo,Quijada:2023fkc}. As was argued in \cite{Tavakoli:2022kmo}, the parameters can be adjusted such that the cusps coincide, which result in generalisations of thermodynamic triple points, namely are quadruple points or generally $n$-tuple points. From these we observe that various black holes with different matter fields seem to have different number of cusps in their $\mathcal{F}$-$T$ curves, and hence different number of phases. Therefore it seems worthwhile to ask: \emph{What dictates the number of distinct thermodynamic phases of a given spacetime?}

The aim of this paper is to approach this question by viewing the $\mathcal{F}$ vs $T$ curve as an $A$-discriminant. The theory of $A$-discriminants is well established in the mathematical literature \cite{GKZ-94}. In the present paper we will use it to understand why different black hole solutions exhibit different number of phases. In particular, we will see how the number of distinct terms in the metric component $g_{tt}$ dictates the number distinct phases in its thermodynamics. This is done by establishing the upper bound in the number of cusps of the $A$-discriminants, which translates to cusps in the $\mathcal{F}$-$T$ curve. It is then natural to ask whether there exist any results in the mathematical literature that may assist in addressing this problem. To our knowledge, the most relevant existing result is one by Dickenstein et al. \cite{Dickenstein2007}, subsequently sharpened in Rusek's thesis \cite{Rusekthesis}, which (loosely speaking) states that for a polynomial of $n$ variables with $n+3$ terms, its $A$-discriminant has at most $n$ cusps. For black hole thermodynamics, we only have one relevant variable which is the horizon radius $r_+$. Therefore $n=1$ and this theorem states that a polynomial of 4 terms can has at most 1 cusp. This statement is only applicable to the Schwarzschild-AdS black hole, where its relevant polynomial indeed is a sum of exactly 4 terms and its thermodynamic curve has 1 cusp separating the large and small black hole branches. To consider other systems (for instance the Reissner--Nordstr\"{o}m and Euler--Heisenberg black holes), we have to extend this result to univariate polynomials with an arbitrary number of terms. If one considers the grand canonical ensemble where the electric potential of the spacetime is fixed at the boundary, then the black hole charge becomes another extensive parameter. In this case, the problem is now described by a bivariate polynomial. In previous literature, the grand canonical ensemble for the Reissner--Nordstr\"{o}m-AdS black hole is considered in, for example, Refs.~\cite{Chamblin:1999tk,Chamblin:1999hg,Ghosh:2021uxg}.

The rest of this paper is organised as follows. In Sec.~\ref{sec_thermo} we review the neccessary notions in black hole thermodynamics and establish an alternative viewpoint of the first law of thermodynamics. This alternate view is shown in Sec.~\ref{sec_discrim} to be the $A$-discriminant of a univariate polynomial. In the same section, we give an upper bound of the number of cusps in a two-dimensional cross section of the $A$-discriminant. Explicit examples are studied in Sec.~\ref{sec_examples}. In Sec.~\ref{sec_gc}, we consider the grand canonical ensemble for the Reissner--Nordstr\"{o}m-AdS black hole. The paper concludes with Sec.~\ref{sec_conclusion}. Here, we work in units where $c=\hbar=k_{\mathrm{B}}=1$ and our convention for Lorentzian signature is $(-,+,+,\ldots,+)$.

\section{Thermodynamic quantities and the first law} \label{sec_thermo}

In this section, we shall review the basic notions of black hole thermodynamics, particularly its free energy $\mathcal{F}$, temperature $T$, and other thermodynamic quantities. We focus particularly on static spacetimes with spherical, planar, or hyperbolic symmetry. 

To start, let us consider $d$-dimensional spacetimes of the form 
\begin{align}
 \dif s^2&=-f(r)\dif t^2+f(r)^{-1}\dif r^2+r^2\gamma_{ij}\dif\theta^i\dif\theta^j, \label{general_metric}
\end{align}
where $\gamma_{ij}\dif\theta^i\dif\theta^j$ is the metric on $S^{d-2}_k$, a $(d-2)$-dimensional space of constant curvature $k=0,\pm 1$. These are solutions for which $g_{tt}=-1/g_{rr}$ \cite{Jacobson:2007tj}, which typically describe spacetimes in the presence of (non-)linear  electromagnetic fields, or when the stress-energy tensor components satisfy ${T^t}_t={T^r}_r$. Spacetimes of this form was considered by Padmanabhan to describe a local notion of thermodynamic energy \cite{Padmanabhan:2002sha}. 

We further assume $f(r)$ takes the form 
\begin{align}
 f(r)=k-\frac{\mu}{r^{d-3}}+Y(r). \label{f_form}
\end{align}
So $Y(r)=0$ solves the vacuum equation $R_{\mu\nu}=0$. (This is the Schwarzschild solution.) The presence of non-zero $T_{\mu\nu}$ will contribute additional terms to $Y(r)$, where we assume to be a sum of distinct powers of $r$ consisting of $K$ terms. Specific examples of $Y(r)$ will be considered in Sec.~\ref{sec_examples}. In order for the theory of $A$-discriminants to be applicable, we require $Y(r)$ to a rational function. 

The horizon will be taken as $r=r_+$, the largest root of $f(r)$, such that 
\begin{align}
 f(r_+)=0=k-\frac{\mu}{r_+^{d-3}}+Y(r_+)\quad\leftrightarrow\quad\mu=kr_+^{d-3}+r_+^{d-3}Y(r_+).
\end{align}
The temperature can be obtained using the Euclidean periodicity trick \cite{Hartle:1976tp,Gibbons:1976pt,Gibbons:1976ue,Lewkowycz:2013nqa} which leads to
\begin{align}
 T=\frac{f'(r_+)}{4\pi}=\frac{1}{4\pi r_+}\sbrac{(d-3)k+(d-3)Y(r_+)+r_+Y'(r_+)}.\label{general_temperature}
\end{align}
The entropy is defined as one quarter of the horizon area,
\begin{align}
 S=\frac{\Omega r_+^{d-2}}{4G}, \label{general_entropy}
\end{align}
where $\Omega=\int\dif^{d-2}\theta\sqrt{\gamma}$ is the area of $S^{d-2}_k$. In particular, for $k=1$, it is the area of the $(d-2)$-sphere, which is $\Omega={2\pi^{\frac{d-1}{2}}}/{\Gamma\brac{\frac{d-1}{2}}}$. Let us further assume that the mass of the black hole is computed to be 
\begin{align}
 M=\frac{\Omega}{16\pi G}(d-2)\mu=\frac{\Omega}{16\pi G}(d-2)\sbrac{kr_+^{d-3}+r^{d-3}_+Y(r_+)}. \label{general_mass}
\end{align}
In other words, fixing a horizon radius $r_+$ determines the black hole's mass $M$, temperature $T$, and entropy $S$.  In the canonical ensemble, we take the free energy of this generic spacetime to be $\mathcal{F}=M-TS$. 

Now, viewing it in a slightly different manner, suppose we define a function $\mathcal{P}=\mathcal{F}-M+TS$. This vanishes whenever $\mathcal{F}$ takes the value of the free energy of the system,  
\begin{align}
 \mathcal{P}=\mathcal{F}-M+TS=0. \label{condition1}
\end{align}
If we consider a neighbouring solution with a slightly perturbed horizon radius 
\begin{align}
 r_+\rightarrow r_++\delta r_+,
\end{align}
then the mass and entropy of this new solution differs according to
\begin{align}
 M\rightarrow M+\delta M,\quad S\rightarrow S+\delta S,
\end{align}
where
\begin{align}
 \delta M=\frac{\partial M}{\partial r_+}\delta r_+,\quad \delta S=\frac{\partial S}{\partial r_+}\delta r_+.
\end{align}
The derivatives $\frac{\partial M}{\partial r_+}$ and $\frac{\partial S}{\partial r_+}$ can be obtained by directly differentiating Eqs.~\Eqref{general_mass} and \Eqref{general_entropy}. At fixed temperature, one can directly show that this obeys the first law of thermodynamics 
\begin{align}
 \delta M=T\delta S. \label{FirstLaw} 
\end{align}
Taking $\delta\mathcal{F}=0$, the first law \Eqref{FirstLaw} can then be written as 
\begin{align}
 \delta\brac{\mathcal{F}-M+TS}&=\frac{\partial}{\partial r_+}\brac{\mathcal{F}-M+TS}\delta r_+=\mathcal{P}'\delta r_+=0.\label{condition2}
\end{align}

Strictly speaking, for the types of spacetime considered here $\mathcal{P}$ is a rational function which can be written in the form 
\begin{align}
 \mathcal{P}=\frac{P}{H},
\end{align}
where $P$ and $H$ are polynomials in $r_+$. Assuming $H$ is non-vanishing, Eqs.~\Eqref{condition1} and \Eqref{condition2} is equivalent to 
\begin{align}
 P=P'=0, \label{condition_discriminant}
\end{align}
Viewed from this perspective, the thermodynamics of spacetimes of the form \Eqref{general_metric} is determined from its mass $M$ and entropy $S$. The free energy $\mathcal{F}$ and temperature $T$ is subsequently determined from the vanishing of $P$ and its derivative. Note that $T$ has been obtained in this manner as an alternative to, and agrees with, the Euclidean periodicity trick.

\section{\texorpdfstring{$A$}{A}-discriminants, cusps, and thermodynamic phases} \label{sec_discrim}

In the language of algebraic geometry, the condition \Eqref{condition_discriminant} is what defines an $A$-discriminant. More precisely, consider univariate polynomials of the form 
\begin{align}
 P(x)=c_1 x^{a_1}+c_2 x^{a_2}+\ldots+c_N x^{a_N},
\end{align}
where $A=\left\{a_1,a_2,\ldots,a_N \right\}$ is a set of $N$ integer exponents of the polynomial, called the \emph{support}, and $c_1,c_2,\ldots,c_N$ are the coefficients of the polynomial. Since we are mainly concerned with roots $P(x)=0$, integer multiples of coefficients are considered to be equivalent. That is, $(c_1,c_2,\ldots,c_N)\sim(\lambda c_1,\lambda c_2,\ldots,\lambda c_N)$ for non-zero $\lambda$. In other words, we can parametrise the family of polynomials geometrically by viewing the coefficients as elements of projective space $[c_1:c_2:\ldots:c_N]\in\mathbb{R}P^{N-1}$. In this way, any polynomial of a given support $A$ is represented by a point in $\mathbb{R}P^{N-1}$. Given a support set $A$, the subset of polynomials that satisfies
\begin{align}
 P(x)=P'(x)=0 \label{def_Adiscriminant}
\end{align}
defines a co-dimension-1 hypersurface in $\mathbb{R}P^{N-1}$. This hypersurface is called the $A$-\emph{discriminant}.\footnote{Note that $A$-\emph{discriminants} are related to the conventional notion of \emph{discriminants}, but not identical to it. This can be shown using the elementary example $P(x)=ax^2+bx+c$: In this case, the set of all quadratic polynomials can be represented by points $[a:b:c]\in\mathbb{R}P^2$. The \emph{discriminant} is $\Delta(P)=b^2-4ac$, whereas the $A$-\emph{discriminant} is defined by the vanishing of the discriminant $b^2-4ac=0$, a 1-dimensional curve in $\mathbb{R}P^2$.} 

 Determining the singular points of an $A$-discriminant is typically a non-trivial problem, especially in the case of multivariable polynomials. In a typical treatment of these problems, one starts with a configuration of $N$ lattice points $A \subset \mathbb{Z}^n$, let  \begin{align}
 F_A(z_1,\ldots,z_n)=\sum_{j=1}^N c_j z_1^{a_1^j}z_2^{a_2^j}\cdots z_n^{a_n^j},
\end{align}
denote the generic polynomial in $n$ complex variables $(z_1, . . . , z_n)$ with exponents in $A$.  Under certain general conditions, Gelfand, Kapranov and Zelevinsky \cite{GKZ-94} showed that there exists an irreducible polynomial with integer coefficients $D_A = D_A(z)$ in the vector of coefficients $a^j=(a_1^j,a_2^j,\ldots,a_n^j)\in A$ (defined up to sign), called the $A$-discriminant, which vanishes for each choice of coefficients $c$ for which $F_A$ and all its partial derivatives have a common root in the torus $(\mathbb{C}^*)^n$. The $A$-discriminant is an affine invariant, in the sense that any configuration of points affinely isomorphic to $A$ has the same discriminant.

A theorem by Dickenstein et al., gives a particular property of $A$-discriminants based on its polynomial, stated here in a form modified to our context,
\begin{mythm}[Dickenstein et al. (2007) \cite{Dickenstein2007}, Rusek (2013) \cite{Rusekthesis}]\label{thm_korben}
 For polynomials of $n$ variables consisting of $n+3$ terms (monomials), its reduced $A$-discriminant has at most $n$ cusps.
\end{mythm}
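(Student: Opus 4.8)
The plan is to prove the bound through the Horn--Kapranov parametrization of the reduced $A$-discriminant. The first point to record is that $N=n+3$ terms in $n$ variables always gives codimension $m=N-(n+1)=2$, so \emph{regardless of $n$} the reduced $A$-discriminant is a rational \emph{plane} curve, and ``cusp'' means a singular point of this curve. Concretely, I would fix a Gale dual $B$, i.e.\ an $N\times 2$ integer matrix whose columns span $\ker\tilde A$, where $\tilde A$ is the $(n+1)\times N$ matrix obtained by appending a row of ones to the support. Writing the two columns as $\beta=(\beta_i)$ and $\gamma=(\gamma_i)$ and setting $\ell_i(\lambda)=\beta_i\lambda_1+\gamma_i\lambda_2=(B\lambda)_i$, the reduced discriminant is the image of
$$\psi:\mathbb{P}^1\to(\mathbb{C}^*)^2,\qquad \psi(\lambda)=\left(\prod_{i=1}^N \ell_i^{\beta_i},\ \prod_{i=1}^N \ell_i^{\gamma_i}\right),$$
a Laurent-monomial parametrization in which the torus factors of the full discriminant have cancelled. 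The structural input used throughout is that the columns of a Gale dual sum to zero: $\sum_i\beta_i=\sum_i\gamma_i=0$.

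Next I would identify cusps with critical points of $\psi$. In the affine chart $\lambda=(s,1)$, so $\ell_i=\beta_i s+\gamma_i$, logarithmic differentiation gives $\frac{d\log y_1}{ds}=\sum_i \beta_i^2/\ell_i=:g(s)$ and $\frac{d\log y_2}{ds}=\sum_i\beta_i\gamma_i/\ell_i$. Substituting $\gamma_i=\ell_i-\beta_i s$ and using $\sum_i\beta_i=0$ collapses the second sum to $-s\,g(s)$, so the velocity factors as $\psi'(s)=g(s)\,(y_1,-s\,y_2)$. Since $y_1,y_2$ never vanish on the torus, $\psi'(s)=0$ if and only if $g(s)=0$; hence every cusp of the image pulls back under $\psi$ to a zero of $g$, and the number of cusps is at most the number of such zeros.

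The heart of the argument is then a degree count. Clearing denominators, $g=Q/\prod_i\ell_i$ with $Q(s)=\sum_i\beta_i^2\prod_{j\ne i}\ell_j$, a polynomial of a priori degree $N-1$. Its top coefficient is $(\prod_j\beta_j)\sum_i\beta_i=0$ by the column-sum relation, and a short computation shows the coefficient of $s^{N-2}$ equals $(\prod_j\beta_j)[(\sum_i\beta_i)(\sum_k\gamma_k/\beta_k)-\sum_i\gamma_i]$, which again vanishes because both column sums are zero. Therefore $\deg Q\le N-3=n$, so $g$ has at most $n$ zeros, giving at most $n$ cusps. Generically the coefficient of $s^{N-3}$ is nonzero, matching the fact that Schwarzschild-AdS, with $N=4$, realizes exactly one cusp, so the bound is sharp.

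I expect the main obstacle to be the geometric bookkeeping that makes ``cusps $\le$ critical points'' rigorous rather than heuristic. One must invoke that the Horn--Kapranov map is birational onto the reduced discriminant, so that distinct cusps of the image correspond to distinct critical parameters; one must exclude the parameters where some $\ell_i=0$, since these map to the toric boundary and not to affine cusps; and one must treat the point $s=\infty$ of $\mathbb{P}^1$ symmetrically. Reparametrizing by $u=1/s$ yields an analogous numerator $\sum_i\gamma_i^2\prod_{j\ne i}(\beta_j+\gamma_j u)$ of degree $\le n$, confirming that the total ramification over all of $\mathbb{P}^1$ stays bounded by $n$ and that no cusp escapes to infinity. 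Formulating this normalization/ramification correspondence carefully, rather than the elementary degree estimate, is where the real care is needed.
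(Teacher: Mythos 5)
The paper never proves this statement---Theorem~\ref{thm_korben} is imported as a citation from Dickenstein et al.\ and Rusek---so there is no in-paper proof to compare against; the closest analogue is the paper's elementary proof of Theorem~\ref{thm_unicusps}. Your route is correct in outline and is essentially the argument of the cited sources: for $N=n+3$ the Gale dual has two columns, the reduced discriminant is the rational plane curve uniformized by the Horn--Kapranov map, and the whole content is the double degree drop of $Q(s)=\sum_i\beta_i^2\prod_{j\ne i}\ell_j$ forced by $\sum_i\beta_i=\sum_i\gamma_i=0$. I checked the factorization $\psi'(s)=g(s)\,(y_1,-s\,y_2)$ and both leading-coefficient cancellations (at $s^{N-1}$ and $s^{N-2}$), and they are right, so $\deg Q\le N-3=n$ does give the bound; your $u=1/s$ check also correctly shows that if the point at infinity becomes critical the finite count drops by one, so the total never exceeds $n$. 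By contrast, the paper's own Theorem~\ref{thm_unicusps} is pinned to $n=1$ but free in $N$: it eliminates the two distinguished coefficients directly, extracts the common numerator $x^2\varphi''+(1-m-n)x\varphi'+mn\varphi$ of $a'$ and $b'$, and applies Descartes' rule of signs. The two arguments are the same in spirit (cusps are zeros of a single numerator whose degree or term count drops by exactly the right amount), but yours exploits the codimension-$2$ Gale structure and hence generalizes in $n$ while requiring exactly $n+3$ terms, whereas the paper's works on a two-dimensional coefficient cross-section for any number of terms in one variable. The caveats you flag---birationality of the Horn--Kapranov map so that image cusps correspond bijectively to critical parameters and nodes are not conflated with cusps, exclusion of the zeros of the $\ell_i$, and the real-versus-complex root count (harmless, since the complex bound already gives $n$)---are precisely the bookkeeping the published proof supplies; none of them is a gap in your argument.
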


For black hole thermodynamics, we only have a single relevant variable $r_+$. So $n=1$. Therefore this theorem only states that the $A$-discriminant of polynomial of 4 terms has at most 1 cusp. This theorem is only applicable to a single case, namely the Schwarzschild-AdS black hole. (See Sec.~\ref{sec_examples_SchAdS}) In this form, this theorem is not able to explain the number of cusps in the thermodynamics of other black holes. 

Restricting our attention to univariate polynomials, it is desireable to have an extension of Theorem \ref{thm_korben} to predict or explain the number of cusps for other solutions. To obtain such an extension, we note that, as seen in Sec.~\ref{sec_thermo}, the free energy $\mathcal{F}$ and temperature $T$ appear as two particular coefficients of a polynomial. Hence let us consider two-dimensional cross sections of an $A$-discriminant. In particular, let us write the polynomial as 
\begin{align}
 P(x)=ax^m+bx^n+\varphi(x), \label{discrim_P}
\end{align}
where $\varphi(x)$ is a polynomial of $M=N-2$ terms. Contact to thermodynamics is made by identifying $a$ and $b$ to be the temperature and free energy, respectively.  The condition $P(x)=P'(x)=0$ leads to 
\begin{align}
 a=\frac{x\varphi'-n\varphi}{(n-m)x^m},\quad b=\frac{x\varphi'-m\varphi}{(m-n)x^n}. \label{gamma_curve}
\end{align}

Let us denote by $\gamma_A$ the cross section of the $A$-discriminant obtained as a curve $(a(x),b(x))$ parametrised by $x$. This curve may be singular at points when the derivatives of $a$ and $b$ with respect to $x$ vanishes simultaneously. We refer to these points as \emph{cusps}. For a univariate polynomial, the following argument gives an upper bound of the number of cusps on $\gamma_A$:
\begin{mythm} \label{thm_unicusps}
 Let $P(x)$ be a real univariate polynomial as given in Eq.~\Eqref{discrim_P} which consists of $N$ terms in the domain $x>0$. The cross section $\gamma_A$ of the $A$-discriminant has at most $N-3$ cusps.
\end{mythm}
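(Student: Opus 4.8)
The plan is to reduce the cusp-counting problem to counting the positive real roots of a single auxiliary polynomial, and then to bound that count by a Descartes-type argument. By the definition adopted just above the statement, a cusp of $\gamma_A$ occurs at those $x>0$ where $a'(x)=b'(x)=0$ simultaneously, so I would first differentiate the two expressions in Eq.~\Eqref{gamma_curve}. Writing the numerator of $a$ as $u=x\varphi'-n\varphi$ and that of $b$ as $v=x\varphi'-m\varphi$, a direct computation gives $a'(x)\propto x^{-(m+1)}(xu'-mu)$ and $b'(x)\propto x^{-(n+1)}(xv'-nv)$ up to nonzero constants. The key observation, which drives the whole proof, is that both bracketed factors collapse to the \emph{same} expression,
\begin{align}
 \Psi(x):=x^2\varphi''(x)+(1-m-n)\,x\varphi'(x)+mn\,\varphi(x). \label{psi_def}
\end{align}
Hence, for $x>0$, the conditions $a'(x)=0$ and $b'(x)=0$ are equivalent, and the cusps of $\gamma_A$ are precisely the positive zeros of $\Psi$. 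It therefore suffices to bound the number of positive roots of $\Psi$.

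The next step is to count the monomial terms of $\Psi$. Introducing the Euler operator $\theta:=x\,\dif/\dif x$, one checks that $x^2\,\dif^2/\dif x^2=\theta^2-\theta$, so that $\Psi=(\theta-m)(\theta-n)\varphi$. Since $\theta$ acts diagonally on monomials, $\theta\,x^p=p\,x^p$, writing $\varphi(x)=\sum_{i=1}^{M}c_i x^{p_i}$ with $M=N-2$ distinct exponents yields
\begin{align}
 \Psi(x)=\sum_{i=1}^{M}c_i\,(p_i-m)(p_i-n)\,x^{p_i}. \label{psi_expand}
\end{align}
Because $P$ has $N$ distinct terms, the exponents $m,n,p_1,\ldots,p_M$ are pairwise distinct, so none of the factors $(p_i-m)(p_i-n)$ vanishes and $\Psi$ is a (generalised) polynomial with exactly $M=N-2$ terms; for the bound one needs only that it has \emph{at most} $N-2$ terms.

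Finally, I would invoke the fewnomial form of Descartes' rule of signs: a real linear combination of $k$ distinct monomials $x^{p_1},\ldots,x^{p_k}$ has at most $k-1$ zeros on $x>0$. This is proved by factoring out the lowest power $x^{p_1}$ (which is positive for $x>0$ and so does not alter the positive zeros) and applying Rolle's theorem inductively: differentiation reduces the number of terms by one, while by Rolle the number of positive zeros drops by at most one. Applying this with $k=M=N-2$ bounds the positive zeros of $\Psi$, and hence the cusps of $\gamma_A$, by $N-3$, as claimed. The only genuine subtleties are verifying the coincidence $a'=0\iff b'=0$ in the first step — the equality is not evident before the computation in Eq.~\Eqref{psi_def} is carried out — and using the Rolle-based form of Descartes' rule rather than its textbook integer-exponent statement, since $\varphi$ may contain negative or non-integer powers of $r_+$ arising from the matter sector.
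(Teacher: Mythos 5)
Your proposal is correct and follows essentially the same route as the paper's own proof: both reduce the cusp condition $a'=b'=0$ to the vanishing of the common numerator $x^2\varphi''+(1-m-n)x\varphi'+mn\varphi$ and then bound its positive roots by Descartes' rule, giving $M-1=N-3$. Your two additions --- the Euler-operator factorisation $(\theta-m)(\theta-n)\varphi$ showing the auxiliary polynomial has \emph{exactly} $N-2$ nonvanishing terms, and the Rolle-based form of Descartes' rule valid for non-integer exponents --- are worthwhile refinements of the same argument rather than a different approach.
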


\begin{proof}
 The cusps are the points where $a'=b'=0$. Evaluating the derivatives explicitly, we find 
 \begin{align*}
 a'=\frac{x^2\varphi''+(1-m-n)x\varphi'+mn\varphi}{(n-m)x^{m+1}},\quad b'=\frac{x^2\varphi''+(1-m-n)x\varphi'+mn\varphi}{(m-n)x^{n+1}}.
\end{align*}
So the cross section of the $A$-discriminant has cusps when their common numerators of $a'$ and $b'$ vanishes,
\begin{align}
 x^2\varphi''+(1-m-n)x\varphi'+mn\varphi=0. \label{apbp_numerator}
\end{align}
Note that the left-hand side of Eq.~\Eqref{apbp_numerator} is itself another polynomial. The number of roots of this polynomial therefore gives the number of cusps of $\gamma_A$. Now, if $\varphi(x)$ is a polynomial of $M$ terms, then the left hand side of \Eqref{apbp_numerator} will also have at most $M$ terms. By the Descartes Rule of Signs, it can have at most $(M-1)$ positive roots. Since $N=M+2$, the curve $\gamma_A$ will have at most $N-3$ cusps.
\end{proof}

In the context of black hole thermodynamics, $x=r_+$ is the horizon radius (up to constant factors and/or powers), which was the reason why the domain $x>0$ is assumed in Theorem \ref{thm_unicusps}. To see how this mathematical result determines the number of thermodynamic phases, we observe that the $\mathcal{F}$-$T$ curve is obtained form the numerator of $\mathcal{F}-E+TS$. If the spacetime in Eq.~\Eqref{general_metric} has $f(r)=k-\frac{\mu}{r^{d-3}}+Y(r)$ which consists of a sum of $N_f$ distinct powers of $r$, then its expression of mass is a sum of $N_f-1$ distinct powers of $r_+$. (See Eq.~\Eqref{general_mass}.) The entropy is simply the single-term polynomial $\frac{\Omega}{4G}r_+^{d-2}$. Then, the numerator of $\mathcal{F}-E+TS$ is a polynomial of $N_f+1$ terms. If we take $\gamma_A$ to be $\brac{T(r_+),\mathcal{F}(r_+)}$, then by the results discussed above, it will have 
\begin{align}
 N_{\mathrm{cusps}}=N-3=N_f-2\quad\mbox{cusps}. \label{N_cusps}
\end{align}
In the $T$-$\mathcal{F}$ plane, the cusps separate distinct thermodynamic phases. Hence there will be $N_f-1$ thermodynamic phases.

Before closing this section, let us make some remarks about the ambient space in which $A$-discriminant lives, in relation to black-hole thermodynamics. Recalling that Eq.~\Eqref{condition_discriminant} follows from the first law and the definition of the free energy, and this coincides precisely with Eq.~\Eqref{def_Adiscriminant} which defines the $A$-discriminant. As discussed above, this $A$-discriminant is a co-dimension 1 hypersurface living in $\mathbb{R}P^{N-1}$, which is the projective space parametrising the coefficients of all polynomials with support $A$. 

A point in $\mathbb{R}P^{N-1}$ is the `vector' of polynomial coefficients written as $(c_1,c_2,\ldots,c_N)$. Some of these entries characterise the thermodynamic quantities of the black hole, while others are simply numerical factors. At least two of them are the temperature and free energy respectively. Say, $c_1\propto T$ and $c_2\propto\mathcal{F}$. In the grand canonical ensemble, a third intensive parameter enters, $c_3\propto\Phi$ where $\Phi$ is the electric potential, whereas the relevant thermodynamic potential should be the grand potential $\mathcal{W}$. That is,  $c_2\propto\mathcal{W}$. In other words, the physically relevant (thermodynamic) space of the $A$-discriminant is the cross-section where one axis is the thermodynamic potential while the rest are intensive parameters. This perhaps can be related to a simple case of the thermodynamic phase space of \cite{Rajeev:2007uk,Ghosh:2019rsu}, where the black hole mass $M$, temperature $T$, and entropy $S$ forms a three-dimensional contact manifold. Here, we take the canonical ensemble by considering the Legendre transform $\mathcal{F}$ of $M$, and the basic variables are $(\mathcal{F},T,S)$. Equilibrium states form a submanifold in this space, and the $A$-discriminant is the one-dimensional slice in the $\mathcal{F}$-$T$ plane.

\section{Examples} \label{sec_examples}

In this section we compute the $\mathcal{F}$-$T$ curves explicitly for various black hole solutions. The examples below are known results in the literature, but we use them as particular cases that verify the statement of Theorem \ref{thm_unicusps}.

\subsection{Schwarzschild-AdS} \label{sec_examples_SchAdS}

The Schwarzschild-Anti-de Sitter black hole, is described by the metric
\begin{subequations}\label{metric_AdSBH}
\begin{align}
 \dif s^2&=-f(r)\dif t^2+f(r)^{-1}\dif r^2+r^2\dif\Omega_{(d-2)}^2,\\
  f(r)&=1-\frac{\mu}{r^{d-3}}+\frac{r^2}{\ell^2},\quad \ell^2=-\frac{(d-1)(d-2)}{2\Lambda}.
\end{align}
\end{subequations}
This is the metric of the form \Eqref{general_metric} for which $k=1$. So we use the notation where $\dif\Omega_{(d-2)}^2$ is the line element of a unit sphere $S^{d-2}$. This is a solution to Einstein's equation $R_{\mu\nu}=\frac{2\Lambda}{d-2}g_{\mu\nu}$ where $\Lambda$ is a negative cosmological constant. We see that $f(r)$ here consists of $N_f=3$ terms. The corresponding function $\mathcal{P}=\mathcal{F}-M+TS$ then consists of exactly 4 terms. This is the only case for which Theorem \ref{thm_korben} by Dickenstein et al. \cite{Dickenstein2007} is applicable, which states that its $\mathcal{F}$-$T$ curve has at most $n=1$ cusp. Nevertheless, it is also a case applicable to Theorem \ref{thm_unicusps} and Eq.~\Eqref{N_cusps}. Indeed, this single cusp separates the large and small black hole branches \cite{Hawking:1982dh}. 

We now obtain its temperature and free energy using the $A$-discriminant calculation. The mass and entropy of this spacetime is 
\begin{subequations}
\begin{align}
 M&=\frac{(d-2)\Omega}{16\pi G}\mu=\frac{(d-2)\Omega}{16\pi G}\brac{r_+^{d-3}+\frac{r_+^{d-1}}{\ell^2}},\\
 S&=\frac{\mathcal{A}}{4G}=\frac{\Omega}{4G}r_+^{d-2}
\end{align}
\end{subequations}
Taking $\mathcal{P}=\mathcal{F}-M+TS$, we have, upon rearranging,
\begin{align}
 \mathcal{P}&=\frac{\Omega}{16\pi G}\sbrac{-(d-2)r_+^{d-3}-(d-2)\frac{r_+^{d-1}}{\ell^2}+(4\pi T)r_+^{d-2}+\frac{16\pi G}{\Omega}\mathcal{F}}.
\end{align}
Introducing dimensionless variables $x=\frac{r_+}{\ell}$, $a=4\pi\ell T$, and $b=\frac{16\pi G}{\Omega \ell^{d-3}}\mathcal{F}$, we have 
\begin{align}
 \frac{16\pi G}{\Omega\ell^{d-3}}\mathcal{P}=P(x)=ax^{d-2}+b-(d-2)x^{d-3}-(d-2)x^{d-1},
\end{align}
so that $P(x)$ is a polynomial of degree $d-1$ with yet-to-be-determined coefficients $a$ and $b$. The $A$-discriminant of $P(x)$ is obtained by $P(x)=P'(x)=0$. Solving the simultaneous equations for $a$ and $b$ gives 
\begin{align}
 a=\frac{(d-3)x^{d-3}+(d-1)x^{d-1}}{x^{d-2}},\quad b=x^{d-3}-x^{d-1}.
\end{align}
Restoring the dimensionful quantities we recover 
\begin{align}
 T&=\frac{1}{4\pi r_+}\brac{d-3+(d-1)\frac{r_+^2}{\ell^2}},\quad \mathcal{F}=\frac{\Omega}{16\pi G}\brac{r_+^{d-3}-\frac{r_+^{d-1}}{\ell^2}}.
\end{align}
As expected, this expression for $T$ agrees with the Euclidean periodicity trick, and the free energy agrees with the calculation obtained by evaluating the Euclidean action on-shell, $I=\beta\mathcal{F}$ where $\beta=1/T$.

The plot of $\mathcal{F}$ against $T$ is shown in Fig.~\ref{fig_SchAdS}. We shall view these as a parametrised curve $\brac{T(r_+),\mathcal{F}(r_+)}$. The curve has a single cusp (i.e., a singular point where the tangent to the curve is undefined), corresponding to $r_+=r_1\equiv\sqrt{\frac{d-3}{d-1}}\;\ell$ and this cusp subdivides the curve into two segments. The branch with $r_+>r_1$ is called the \emph{large black hole} branch, marked as \textsf{A} in Fig.~\ref{fig_SchAdS} and the one with $r_+<r_1$ is the \emph{small black hole} branch, marked as \textsf{B} in Fig.~\ref{fig_SchAdS}. As mentioned above, $f(r)$ consists of $N_f=3$ terms. By the result of the previous section, this leads to the free energy curve having $N_f-2=1$ cusp.
\begin{figure}
 \centering
 \includegraphics{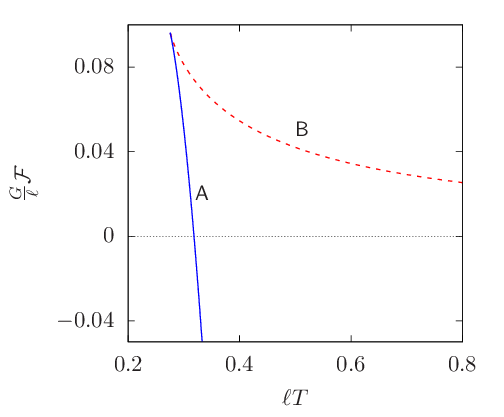}
 \caption{Graph of free energy $\mathcal{F}$ vs temperature $T$ for the spherically-symmetric Schwarzschild-AdS black hole. Branch \textsf{A} corresponds to $r_1<r_+<\infty$ and branch \textsf{B} corresponds to $0<r_+<r_1$, where $r_1=\sqrt{\frac{d-3}{d-1}}\;\ell$.}
 \label{fig_SchAdS}
\end{figure}

\subsection{Reissner--Nordstr\"{o}m-AdS}

Now consider Einstein--Maxwell theory described by the action 
\begin{align}
 I=\frac{1}{16\pi G}\int\dif^dx\sqrt{-g}\brac{R-2\Lambda-F_{\mu\nu}F^{\mu\nu}},
\end{align}
where $F=\dif A$ is the Faraday tensor which is the exterior derivative of the gauge potential 1-form $A$. In components, $F_{\mu\nu}=\nabla_\mu A_\mu-\nabla_\nu A_\nu$. Extremising the action leads to the equations of motion
\begin{subequations}
 \begin{align}
  &R_{\mu\nu}=\frac{2\Lambda}{d-2}g_{\mu\nu}+2F_{\mu\lambda}{F_\nu}^\lambda-\frac{1}{d-2}F_{\rho\sigma}F^{\rho\sigma}g_{\mu\nu},\\
  &\nabla_\mu F^{\mu\nu}=0.
 \end{align}

\end{subequations}
The Reissner--Nordstr\"{o}m solution describes a charged black hole that is asymptotically AdS. The metric and gauge potential are given by
\begin{subequations}\label{metric_RNAdS}
\begin{align}
 \dif s^2&=-f(r)\dif t^2+f(r)^{-1}\dif r^2+r^2\gamma_{ij}\dif\theta^i\dif\theta^j,\\
  f(r)&=k-\frac{\mu}{r^{d-3}}+\frac{q^2}{r^{2(d-3)}}+\frac{r^2}{\ell^2},\quad \ell^2=-\frac{(d-1)(d-2)}{2\Lambda},\\
  A&=\brac{\Phi-\sqrt{\frac{d-2}{2(d-3)}} \frac{q}{r^{d-3}}}\dif t,
\end{align}
\end{subequations}
where $\mu$ and $q$ are the mass and charge parameter, respectively. Here $k$ is the unit curvature parameter for the metric $\gamma_{ij}$. For this solution we have $N_f=4$ terms. This lies beyond the assumptions of Theorem \ref{thm_korben} by Dickenstein et al. \cite{Dickenstein2007}. However, by Theorem \ref{thm_unicusps}, its $\mathcal{F}$-$T$ curve should have at most 2 cusps. These two cusps gives the swallowtail figure studied in Refs.~\cite{Chamblin:1999tk,Chamblin:1999hg,Kubiznak:2012wp}.

We now obtain its temperature and free energy using the $A$-discriminant. As before, the horizon radius is denoted by $r_+$, for which $f(r_+)=0$, so that
\begin{align}
 \mu=kr_+^{d-3}+\frac{r_+^{d-1}}{\ell^2}+\frac{q^2}{r^{d-3}_+}. \label{mu_to_rplus}
\end{align}
We will take $(r_+,q)$ to parametrise the family of RNAdS solutions. For a given $r_+$, we can recover $\mu$ using Eq.~\Eqref{mu_to_rplus}.

The mass, charge, and entropy are given by
\begin{subequations} \label{RNAdS_quantities}
\begin{align}
 M&=\frac{\Omega}{16\pi G}(d-2)\mu=\frac{\Omega}{16\pi G}(d-2)\brac{kr_+^{d-3}+\frac{r_+^{d-1}}{\ell^2}+\frac{q^2}{r_+^{d-3}}},\\
 Q&=\frac{\Omega}{16\pi G}\sqrt{8(d-2)(d-3)}\;q,\\
 S&=\frac{\Omega}{4G}r_+^{d-2}.
\end{align}
\end{subequations}
In this case, the function $\mathcal{P}=\mathcal{F}-M+TS$ is
\begin{align}
 \mathcal{P}&=\mathcal{F}-\frac{\Omega}{16\pi G}(d-2)\brac{kr_+^{d-3}+\frac{r_+^{d-1}}{\ell^2}+\frac{q^2}{r^{d-3}_+}}+T\cdot\frac{\Omega}{4G}r_+^{d-2}\nonumber\\
 \frac{16\pi G}{\Omega\ell^{d-3}}x^{d-3}\mathcal{P}&=bx^{d-3}-(d-2)\brac{kx^{2d-6}+x^{2d-4}+w^2}+ax^{2d-5},\label{RNAdS_Pcal}
\end{align}
where we have defined dimensionless quantities
\begin{align}
 x&=\frac{r_+}{\ell},\quad w=\frac{q}{\ell^{d-3}},\quad a=4\pi\ell T,\quad b=\frac{16\pi G}{\Omega\ell^{d-3}}\mathcal{F}.
\end{align}
The thermodynamic curve is obtained as the discriminant of the right hand side of \Eqref{RNAdS_Pcal}, which is a polynomial 
\begin{align}
 P(x)=ax^{2d-5}+bx^{d-3}-(d-2)\brac{kx^{2d-6}+x^{2d-4}+w^2}
\end{align}
Computing curve $\gamma_A$, we have 
\begin{align}
 a&=\frac{(d-3)kx^{2d-6}+(d-1)x^{2d-4}-(d-3)w^2}{x^{2d-5}},\quad b=\frac{kx^{2d-6}-x^{2d-4}+(2d-5)w^2}{x^{d-3}}
\end{align}
Restoring dimensionful variables, we recover 
\begin{align}
 T=\frac{1}{4\pi r_+}\sbrac{(d-3)k+(d-1)\frac{r_+^2}{\ell^2}-(d-3)\frac{q^2}{r_{+}^{2d-6}}},\quad\mathcal{F}=\frac{\Omega}{16\pi G}\brac{kr_+^{d-3}-\frac{r_+^{d-1}}{\ell^2}+(2d-5)\frac{q^2}{r_+^{d-3}}}.
\end{align}
This agrees with the Euclidean periodicity trick, along with the Euclidean action $I=\beta\mathcal{F}$ evaluated with fixed charge boundary conditions. 

\begin{figure}
 \centering
 \includegraphics{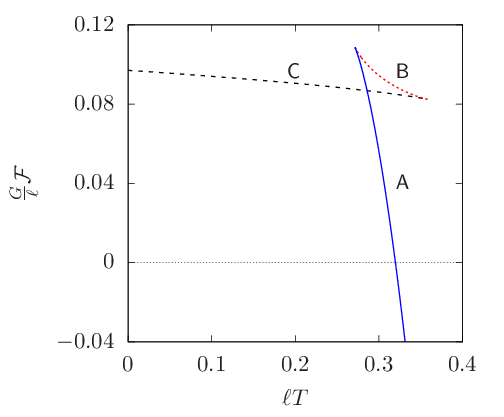}
 \caption{Graph of free energy $\mathcal{F}$ against temperature $T$ for the Reissner--Nordstr\"{o}m-AdS black hole in the case $k=1$, $d=4$, and $q=0.09667\ell$. In this case we have the distinctive swallowtail figure which has two cusps subdividing the curve into three segments, \textsf{A} ($r>r_2$), \textsf{B} ($r_1<r<r_2$), and \textsf{C} ($0<r_+<r_1$), where $r_1$ and $r_2$ are as given in Eq.~\Eqref{RNAdS_cusps}.}
 \label{fig_RNAdS}
\end{figure}

In Fig.~\ref{fig_RNAdS}, we plot the curve for the case $d=4$, $k=1$, and $q=0.09667\ell$, which describes the spherically symmetric charged black hole in AdS. 
As was already pointed out in \cite{Chamblin:1999tk}, we have the swallowtail figure similar to the Van der Waals fluid. In terms of $A$-discriminants, we have in this case $Y(r)=\frac{r^2}{\ell^2}+\frac{q^2}{r^{2d-6}}$, a sum of two terms. Therefore by the result of the previous section, it explains why the free energy curve has two cusps which then leads to this swallowtail figure. In this case, there are two cusps when $r_+$ takes the values
\begin{align}
 r_1=0.1757761483\ell,\quad r_2=0.5499418869\ell,\label{RNAdS_cusps}
\end{align}
which divides the curve into three segments, $0<r_+<r_1$ (segment \textsf{A}), $r_1<r<r_2$ (segment \textsf{B}), and $r_+>r_1$ (segment \textsf{C}).

\subsection{Einstein-power-Maxwell}

We now consider an Einstein--Maxwell theory where the Maxwell term $F_{\mu\nu}F^{\mu\nu}$ has an exponent $p$ that is not neccessarily 1. Theories of this form was considered in considered in \cite{Gonzalez:2009nn,Roychowdhury:2012vj}. This is a non-linear generalisation of Einstein--Maxwell theory, whose action is 
\begin{align}
 I=\frac{1}{16\pi G}\int\dif^dx\sqrt{-g}\brac{R-2\Lambda+X^p},
\end{align}
where $X=-F_{\mu\nu}F^{\mu\nu}$. The equations of motion are 
\begin{subequations}
\begin{align}
 &R_{\mu\nu}=\frac{2\Lambda}{d-2}g_{\mu\nu}+X^{p-1}\brac{2pF_{\mu\lambda}{F_\nu}^\lambda+\frac{2p-1}{d-2}Xg_{\mu\nu}},\\
 &\nabla_\mu\brac{X^{p-1}F^{\mu\nu}}=0.
\end{align}
\end{subequations}
These equations admit a black hole solution 
\begin{subequations}
 \begin{align}
  \dif s^2&=-f(r)\dif t^2+f(r)^{-1}\dif r^2+r^2\gamma_{ij}\dif\theta^i\dif\theta^j,\\
  f(r)&=k-\frac{\mu}{r^{d-3}}+\frac{r^2}{\ell^2}+\frac{q^2}{r^c},\quad \ell^2=-\frac{(d-1)(d-2)}{2\Lambda}, \\
  A&=\brac{\Phi-\frac{2p-1}{d-2p-1}\frac{\sigma}{r^{\frac{d-2p-1}{2p-1}}}}\dif t, \\
  \sigma^{2p}&=\frac{(d-2)(d-2p-1)}{2^p(2p-1)^2}q^2,\quad c=\frac{2\brac{(d-4)p+1}}{2p-1}.
 \end{align}
\end{subequations}
This is a simple generalisation of the solution considered in \cite{Gonzalez:2009nn} to include a negative cosmological constant, as well as planar and hyperbolic symmetry. As pointed out in \cite{Gonzalez:2009nn}, the asymptotic behaviour of the solution depends on the exponent $p$. For the present purposes, we consider asymptotically-AdS solutions, which is satisfied for $\half<p<\frac{d-1}{2}$. Here the function $f(r)$ also consists of $N_f=4$ terms. Hence by Theorem \ref{thm_unicusps}, its $\mathcal{F}$-$T$ curve should have at most $N_f-2=2$ cusps, similar to the Reissner--Nordstr\"{o}m case. 

The mass, entropy, and charge are given by
\begin{subequations}\label{EpL_MassEntropy}
\begin{align}
 M&=\frac{\Omega}{16\pi G}(d-2)\mu=\frac{\Omega}{16\pi G}(d-2)\brac{kr_+^{d-3}+\frac{r_+^{d-1}}{\ell^2}+\frac{q^2}{r_+^{\frac{d-2p-1}{2p-1}}}},\\
 S&=\frac{\Omega r_+^{d-2}}{4G},\\
 Q&=\frac{\Omega 2^{p-1}p}{4\pi G}\sigma^{2p-1}.
\end{align}
\end{subequations} 
Using the Euclidean periodicity trick, the temperature is 
\begin{align}
 T&=\frac{f'(r_+)}{4\pi}=\frac{1}{4\pi r_+}\sbrac{(d-3)k+(d-1)\frac{r_+^2}{\ell^2}-\brac{\frac{d-2p-1}{2p-1}}q^2r_+^{-\brac{\frac{2((d-4)p+1)}{2p-1}}}}.\label{EpL_T}
\end{align}
For which one can obtain the free energy 
\begin{align}
 \mathcal{F}=M-TS=\frac{\Omega}{16\pi G}\sbrac{kr_+^{d-3}-\frac{r_+^{d-1}}{\ell^2}+\frac{2pd-6p+1}{2p-1}q^2r_+^{-\brac{\frac{d-2p-1}{2p-1}}}}. \label{EpL_F}
\end{align}
This can be verified with a Euclidean action calculation by adding a boundary term $I_Q=\frac{p}{4\pi G}\int\dif^{d-1}y\sqrt{h}\;n_\mu F^{\mu\nu}X^{p-1}A_\nu$ to fix the charge, where $n^\mu$ is the unit normal at the boundary defined by constant $r=r_b$, and the limit $r_b\rightarrow\infty$ is subsequently taken.

Now, to obtain these quantities using the $A$-discriminant, we consider the function $\mathcal{P}=\mathcal{F}-M+TS$. Using the mass and entropy as given by Eq.~\Eqref{EpL_MassEntropy}, we obtain
\begin{align}
 \mathcal{P}&=\mathcal{F}-\frac{\Omega}{16\pi G}(d-2)\brac{kr_+^{d-3}+\frac{r_+^{d-1}}{\ell^2}+\frac{q^2}{r_+^{c-d+3}}}+T\frac{\Omega}{4G}r_+^{d-2}.
\end{align}
However, our algebraic results from $A$-discriminants may not be applicable if $c$ is not an integer. This can be circumvented if $p$ is rational, which in turn makes $c$ rational as well. Then one can perform a coordinate transformation on $r_+$ which converts the expressions into integer powers of the new variable. Indeed, this transformation depends on the specific value of $p$ and $d$. 

As a demonstration, let us consider concretely the case 
\begin{align*}
 d=4,\quad k=1,\quad p=\frac{4}{5}.
\end{align*}
For this value of $p$, $c=\frac{10}{3}$. Introducing dimensionless variables 
\begin{align}
 r_+=x^3\ell,\quad q=w\ell^{5/3},
\end{align}
the function $\mathcal{P}=\mathcal{F}-M+TS$ can written as 
\begin{align}
 \frac{4G}{\ell}x^{7}\mathcal{P}=P(x)=bx^7-2\brac{x^{10}+x^{16}+w^2}+ax^{13},
\end{align}
where 
\begin{align}
 a=4\pi\ell T,\quad b=\frac{4G}{\ell}\mathcal{F}.
\end{align}
Finding the discriminant by $P(x)=P'(x)=0$, we get 
\begin{align}
 a=\frac{3x^{10}+9x^{16}-7w^2}{3x^{13}},\quad b=\frac{3x^{10}-3x^{16}+13w^2}{3x^7}.
\end{align}
Restoring dimensionful variables, 
\begin{align}
 T=\frac{3\ell^2r_+^{10/3}+9r_+^{16/3}-7q^2\ell^2}{12\pi\ell^2r_+^{13/3}},\quad\mathcal{F}=\frac{3\ell^2r_+^{10/3}-3r_+^{16/3}+13q^2\ell^2}{12G\ell^2r_+^{7/3}}.
\end{align}
\begin{figure}
 \centering 
 \includegraphics{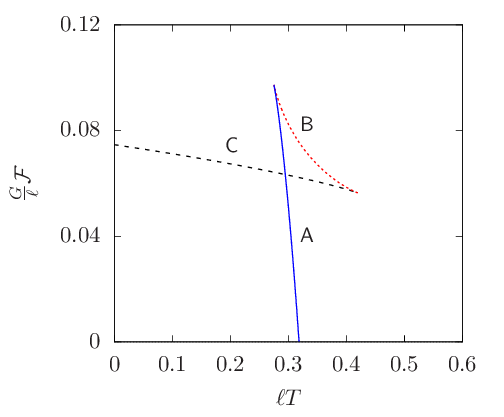}
 \caption{Graph of free energy $\mathcal{F}$ vs temperature $T$ for the Einstein-power-Maxwell in the case $k=1$, $d=4$, $p=\frac{4}{5}$, and $q=0.015\ell^{5/3}$. Similar to the Reissner--Nordstr\"{o}m case, there is a distinctive swallowtail figure with two cusps subdividing the curve into three segments, \textsf{A} ($r>r_2$), \textsf{B} ($r_1<r<r_2$), and \textsf{C} ($0<r<r_1$). In this case, $r_1=0.1652858553\ell$ and $r_2=0.5731353595\ell$.}
 \label{fig_EpM}
\end{figure}
This agrees with the standard methods of Eqs.~\Eqref{EpL_T} and \Eqref{EpL_F}.

As expected, the $\mathcal{F}$-$T$ curve has at most two cusps, similar to the Reissner--Nordstr\"{o}m case. Indeed, this is seen explicitly in Fig.~\ref{fig_EpM} for $k=1$, $d=4$, $p=\frac{4}{5}$, and $q=0.015\ell^{5/3}$. The two cusps correspond to horizon radii 
\begin{align}
 r_1=0.1652858553\ell,\quad r_2=0.5731353595\ell.
\end{align}
The cusps separate the $\mathcal{F}$-$T$ curve into three segments, namely \textsf{A} ($r_+>r_2$), \textsf{B} ($r_1<r_+<r_2$), and \textsf{C} ($0<r_+<r_1$).

\subsection{Euler--Heisenberg black hole}

We now consider a black hole solution in Euler--Heisenberg's model of non-linear electrodynamics \cite{Heisenberg:1936nmg}, whose action is described by
\begin{align}
 I=\frac{1}{16\pi G}\int\dif^4x\sqrt{-g}\brac{R-2\Lambda-F_{\mu\nu}F^{\mu\nu}+\bar{a}\brac{F_{\mu\nu}F^{\mu\nu}}^2+\bar{b}\brac{F_{\mu\nu}*F^{\mu\nu}}^2},
\end{align}
where $*F_{\mu\nu}=\half\sqrt{-g}\varepsilon_{\mu\nu\rho\sigma}F^{\mu\nu}$ is the dual of $F_{\mu\nu}$, with $\sqrt{-g}\varepsilon_{\mu\nu\rho\sigma}$ being the Levi-Civita tensor. The constants $\bar{a}$ and $\bar{b}$ parametrises the intensity of the non-linear terms of the electromagnetic field.\footnote{In our notation, $\bar{a}=\frac{a}{8}$ and $\bar{b}=\frac{b}{8}$ where $a$ and $b$ are the parameters used by \cite{Magos:2020ykt}.} In Euler--Heisenberg theory \cite{Heisenberg:1936nmg}, we take $\bar{b}=\frac{7}{4}\bar{a}$, with $\bar{a}=\frac{\alpha^2}{45m_{\mathrm{e}}^2}$ where $\alpha$ is the fine structure constant and $m_{\mathrm{e}}$ is the electron mass.

In \cite{Magos:2020ykt}, Magos and Breton considered an asymptotically AdS solution
\begin{subequations}
\begin{align}
 \dif s^2&=-f(r)\dif t^2+f(r)^{-1}\dif r^2+r^2\brac{\dif\theta^2+\sin^2\theta\,\dif\phi^2},\\
 A&=\brac{\Phi-\frac{q}{r}+\frac{4q^3\bar{a}}{5r^5}}\dif t,\\
 f(r)&=1-\frac{2m}{r}+\frac{r^2}{\ell^2}+\frac{q^2}{r^2}-\frac{2q^4\bar{a}}{5r^6}.
\end{align}
\end{subequations}
Here the function $f(r)$ consists of $N_f=5$ terms. Therefore by Theorem \ref{thm_unicusps}, the $\mathcal{F}$-$T$ curve for this system may have at most $N_f-2=3$ cusps.

To obtain $\mathcal{F}$ and $T$ using the $A$-discriminant, we take the mass, entropy, and charge which are given by
\begin{subequations}
\begin{align}
 M&=\frac{m}{G}=\frac{r_+}{2G}\brac{1+\frac{r_+^2}{\ell^2}+\frac{q^2}{r_+^2}-\frac{2q^4\bar{a}}{5r_+^6}},\\
 S&=\frac{\pi r_+^2}{G},\\
 Q&=\frac{q}{G}.
\end{align}
\end{subequations}
Defining dimensionless variables
\begin{align}
 x=\frac{r_+}{\ell},\quad w=\frac{q}{\ell},\quad \bar{a}=\sigma\ell^2,
\end{align}
the function $\mathcal{P}=\mathcal{F}-M+TS$ can be cast into the form
\begin{align}
 \frac{10Gx^5}{\ell}\mathcal{P}=10\brac{\frac{\mathcal{F}G}{\ell}}x^5-5x^6-5x^8-5w^2x^4+2w^4\sigma +10\brac{\pi\ell T}x^7
\end{align}
As before, we seek its discriminant by imposing $P(x)=P'(x)=0$, which is solved for $a$ and $b$ as functions of $x$. Restoring dimensionful variables,
\begin{align}
 T=\frac{1}{4\pi r_+}\brac{1+3\frac{q^2}{\ell^2}-\frac{q^2}{r_+^2}+\frac{2q^4\bar{a}}{r_+^6}},\quad \mathcal{F}=\frac{r_+}{4G}\brac{1-\frac{r_+^2}{\ell^2}+3\frac{q^2}{r_+^2}-\frac{14q^4\bar{a}}{5r_+^6}}.
\end{align}
This curve is plotted in Fig.~\ref{fig_EHAdS} for values $q=0.09667\ell$ and $\bar{a}=0.0007\ell^2$. For these values, the curve has three cusps, as expected from Theorem \ref{thm_unicusps}, which separates the curve into three segments, \textsf{A} ($r_+>r_3$), \textsf{B} ($r_2<r_+<r_3$), \textsf{C} ($r_1<r_+<r_2$), and \textsf{D} ($0<r_+<r_1$).

\begin{figure}
 \centering
 \includegraphics{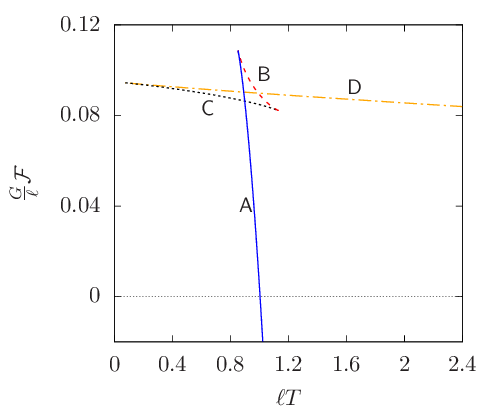}
 \caption{Graph of free energy $\mathcal{F}$ vs $T$ for the Euler--Heisenberg black hole with $q=0.09667\ell$ and $\bar{a}=0.0007\ell^2$. There are three cusps subdividing the curve into four segments, \textsf{A} ($r>r_3$), \textsf{B} ($r_2<r<r_3$), \textsf{C} ($r_1<r<r_2$), and \textsf{D} ($0<r<r_1$), where $r_1$, $r_2$, and $r_3$ are points at which $T'(r_+)=0$. In this example, $r_1=0.07906319692\ell$, $r_2=0.1723732923\ell$, and $r_3=0.5499523272\ell$.}
 \label{fig_EHAdS}
\end{figure}

\subsection{Non-Linear Einstein-Power-Maxwell}
In this example, we consider a non-linear and generalised version of the Einstein-power-Maxwell theory studied in \cite{Gao:2021kvr}. The action is
\begin{equation}
    I = \frac{1}{16\pi G}\int d^4x \sqrt{-g}\brac{R-2\Lambda - \sum^{\infty}_{i=1}\alpha_{i}(F_{\mu\nu}F^{\mu\nu})^i},
\end{equation}
The $\alpha_{i}$ are the \emph{dimensional coupling constants} where $\left[\alpha_{i}\right] = L^{2(i-1)}$.
The equations of motion are
\begin{equation}
\begin{split}
    R_{\mu\nu}-\half R g_{\mu\nu} = -2\frac{dL_{\text{EM}}}{dF^2}F_{\mu}^{\phantom{a}\lambda}F_{\nu\lambda} &+ \frac{1}{2}g_{\mu\nu}L_{\text{EM}},\\
    \nabla_{\mu}\left(\frac{dL_{\text{EM}}}{dF^2}F^{\mu\nu}\right) &= 0,
\end{split}
\end{equation}
where $L_{\text{EM}} = -\sum^{\infty}_{i=1}\alpha_{i}(F_{\mu\nu}F^{\mu\nu})^i$.

The solution found in \cite{Tavakoli:2022kmo} takes the form
\begin{equation}
    \Phi(r) = \sum^{\infty}_{i=1}b_{i}r^{-i} \text{\phantom{aba}and\phantom{aba}}f(r) = 1 + \sum^{\infty}_{i=1}c_{i}r^{-i} + \frac{r^2}{\ell^2},
\end{equation}
with $\alpha_1=1$,
\begin{subequations}
    \begin{align*}
        b_{1} &= Q, \phantom{aba}b_{5} = \frac{4}{5}Q^3\alpha_{2}, \phantom{aba}b_{9} = \frac{4}{3}Q^5\left(4\alpha^2_{2} - \alpha_{3}\right)\\
        b_{13} &= \frac{32}{13}Q^7\left(24\alpha_{2}^3 - 12\alpha_{3}\alpha_{2}+\alpha^4\right),\\
        b_{17} &= \frac{80}{17}Q^9 \left(176\alpha_{2}^4 - 132\alpha_{2}^2 \alpha_{3} + 16\alpha_{4}\alpha_{2}+9\alpha_{3}^2-\alpha_{5}\right),\\
        b_{21} &= \frac{64}{7}Q^{11}\left(1456\alpha_{2}^5+234\alpha_{3}^2\alpha_{2}+208\alpha_{4}\alpha^2_{2}-24\alpha_{4}\alpha_{3}-1456\alpha_{2}^3\alpha_{3}-20\alpha_{5}\alpha_{2}+\alpha_{6}\right),\\
        b_{25} &= \frac{448}{25}Q^{13}(13056\alpha_{2}^6 + 2560\alpha_{2}^3\alpha_{4}-720\alpha_{3}\alpha_{2}\alpha_{4}+16\alpha_{4}^2-300\alpha_{2}^2\alpha^5 \\
        &+ 4320 \alpha_{3}^2\alpha_{2}^2 - 16320\alpha_{2}^4\alpha_{3}+24\alpha_{6}\alpha_{2}-135\alpha^3_{3}+30\alpha_{3}\alpha_{5}-\alpha_{7}),\\
        &\vdots
    \end{align*}
\end{subequations}
and
\begin{equation*}
    c_{1} = -2GM, \phantom{abab}c_{i} = \frac{4Q}{i+2}b_{i-1}
\end{equation*}
for $i > 1$ and that $M$ is the mass of the black hole. Note that if we let all $\alpha_{i} = 0 \phantom{a}\forall i >1$, then we necessarily recover the Reissner--Nordstr\"{o}m-AdS black hole with two horizons.

Writing out the expressions for $\Phi(r)$ and $f(r)$ explicitly, we have
\begin{subequations}
    \begin{align}
        \Phi(r) &= \frac{Q}{r} + \frac{b_{5}}{r^5} + \frac{b_{9}}{r^9} + \frac{b_{13}}{r^{13}} + \frac{b_{17}}{r^{17}} + \frac{b_{21}}{r^{21}} + \frac{b_{25}}{r^{25}},\\
        f(r) &= 1 - \frac{2GM}{r} + \frac{Q^2}{r^2} + \frac{b_{5}Q}{2r^6}+\frac{b_{9}Q}{3r^{10}} + \frac{b_{13}Q}{4r^{14}} + \frac{b_{17}Q}{5r^{18}} + \frac{b_{21}Q}{6r^{22}} + \frac{b_{25}Q}{7r^{26}} + \frac{r^2}{\ell^2}.
    \end{align}
\end{subequations}
Since $f(r)$ has 10 terms, we have $N_{f} =10$. This immediately implies that the $\mathcal{F}$-$T$ curve has at most $N_f-2=8$ cusps according to Theorem \ref{thm_unicusps}.

The mass, entropy, and pressure are given by:
\begin{subequations}\label{non_lin_pressure}
    \begin{align}
        M &= \frac{r_{+}}{2}\brac{1  + \frac{Q^2}{r_{+}^2} + \frac{b_{5}Q}{2r_{+}^6}+\frac{b_{9}Q}{3r_{+}^{10}} + \frac{b_{13}Q}{4r_{+}^{14}} + \frac{b_{17}Q}{5r_{+}^{18}} + \frac{b_{21}Q}{6r_{+}^{22}} + \frac{b_{25}Q}{7r_{+}^{26}} + \frac{r_{+}^2}{\ell^2}},\\
        S &= \frac{\pi}{G} r^2_{+},\\
        P &= \frac{3}{8G\pi\ell^2},
    \end{align}
\end{subequations}
where $r=r_{+}$ is the horizon, defined as the largest root of $f(r)=0$.

Using the Euclidean periodicity trick, the temperature is
\begin{equation}
    T = \frac{1}{4\pi}\frac{df(r_{+})}{dr} = \frac{1}{4\pi r_{+}}\left(1+\frac{3r^2_{+}}{\ell^2}-\frac{Q^2}{r^2_{+}}-\frac{5b_{5}Q}{2r^6_{+}}-\frac{3b_{9}Q}{r^{10}_{+}}-\frac{13b_{13}Q}{4r^{14}_{+}}-\frac{17b_{17}Q}{5r_{+}^{18}}-\frac{7b_{21}Q}{2r_{+}^{22}}-\frac{25b_{25}Q}{7r_{+}^{26}}\right).
\end{equation}
Correspondingly, the free energy is
\begin{align}
 \mathcal{F} &= M - TS\nonumber\\
    &= \frac{1}{G}\Biggl(-\frac{1}{4\ell^2}r_+^3 + \frac{1}{4}r_+ + \frac{3Q^2}{4}\frac{1}{r_+}+\frac{7b_{5}Q}{8}\frac{1}{r_+^5}+\frac{11b_{9}Q}{12}\frac{1}{r_+^9}+\frac{15b_{13}Q}{16}\frac{1}{r_+^{13}}+\frac{19b_{17}Q}{20}\frac{1}{r_+^{17}}\nonumber\\
    &\hspace{2cm}+\frac{23b_{21}Q}{24}\frac{1}{r_+^{21}}+\frac{27b_{25}Q}{28}\frac{1}{r_+^{25}}\Biggl).
\end{align}
Taking numerical values for the parameters on the paper by \cite{Tavakoli:2022kmo}. In particular, these values are:
\begin{equation}\label{non_lin_parameter}
    \begin{split}
        P &= 7.82 \times 10^{-5}, \phantom{aba}Q = 6.623359974,\phantom{aba}\alpha_{2} = -21.63694203,\\
        \alpha_{3} &= 1493.535254, \phantom{aba}\alpha_{4} = -148046.3896,\phantom{aba}\alpha_{5} = 1.759261993 \times 10^7,\\
        \alpha_{6} &= -2.332423991 \times 10^9, \phantom{aba}\alpha_{7} = 3.3277815589 \times 10^{11},
    \end{split}
\end{equation}

By solving $T'(r_+) = 0$ numerically, the location of the cusps $r_i$ (with $T > 0$) was found to be:
\begin{equation}\label{non_lin_EM_horizon}
  \begin{split}
      r_{1} &= 0.1280105460\ell,\phantom{aba}
      r_{2}  = 0.1407566044\ell,\\
      r_{3} &= 0.1791671242\ell,\phantom{aba}
      r_{4}  = 0.2175688113\ell,\\
      r_{5} &= 0.3326818758\ell,\phantom{ba}
      r_{6}  = 0.4352187767\ell,
  \end{split}
\end{equation}
where $\ell=\sqrt{\frac{3}{4\pi GP}}$ can be computed using the given value of $P$ in Eqn.~\Eqref{non_lin_parameter}.

\begin{figure}
    \centering
    \includegraphics[scale=0.80]{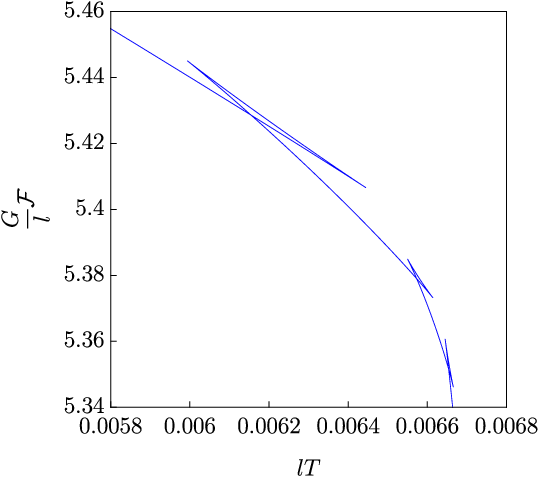}
    \caption{Graph of free energy $\mathcal{F}$ against temperature $T$ for the black hole in non-linear Einstein-Maxwell theory. The parameters used to plot these curves are defined in Eq.\Eqref{non_lin_EM_horizon}.}
    \label{fig_non_lin_EM_main}
\end{figure}

Since $\mathcal{F}\equiv \mathcal{F}(r_{+})$ and $T\equiv T(r_{+})$, we can plot $\mathcal{F}$ against $T$ as a parameterised curve $(T(r_{+}),\mathcal{F}(r_{+}))$. The aforementioned plot is shown in Fig. \ref{fig_non_lin_EM_main}. The main feature of this plot is that there are \textit{three} swallowtail figures, each of which has \textit{two} cusps subdividing the curve into three segments. There are six cusps in total. These three separate swallowtails correspond to three phase stable phase transitions. As shown in \cite{Tavakoli:2022kmo}, it is also possible to have four phases, i.e. eight cusps, in this case. This saturates Theorem \ref{thm_unicusps}.

In this particular case where we have three phases, the three swallowtails can arranged to meet at the same temperature by slight adjustments in the extrema. Plotting the $P-T$ curve shows the existence of a \textit{quadruple point} \cite{Tavakoli:2022kmo}. Therefore, this suggests that for an $n$-tuple point to form, one needs $n-1$ swallowtails, which is equivalent to $n-1$ first order phase transitions.

By further adjustments to the extrema, the three swallowtails can arranged to meet at the same temperature, and this was done in \cite{Tavakoli:2022kmo}. Briefly discussed in \cite{Tavakoli:2022kmo} was the prospect of having an extra swallowtail figure by adding two coupling constants. Introducing two new coupling constants $b_{29}$ and $b_{33}$, $f(r)$ now reads:
\begin{equation}
    f(r) = 1 - \frac{2GM}{r} + \frac{Q^2}{r^2} + \frac{b_{5}Q}{2r^6}+\frac{b_{9}Q}{3r^{10}} + \frac{b_{13}Q}{4r^{14}} + \frac{b_{17}Q}{5r^{18}} + \frac{b_{21}Q}{6r^{22}} + \frac{b_{25}Q}{7r^{26}} + \frac{b_{29}Q}{8r^{30}} + \frac{b_{33}Q}{9r^{34}}+\frac{r^2}{\ell^2}.
\end{equation}
By a judicious relabelling of parameters, we can write $f(r)$ in a more succinct way:
\begin{equation}
    f(r) = 1 - \frac{2GM}{r} + \frac{Q^2}{r^2} + \frac{B_{2}Q^6}{r^6}+\frac{B_{3}Q^{10}}{r^{10}} + \frac{B_{4}Q^{14}}{r^{14}} + \frac{B_{5}Q^{18}}{r^{18}} + \frac{B_{6}Q^{22}}{r^{22}} + \frac{B_{7}Q^{26}}{r^{26}} + \frac{B_{8}Q^{30}}{r^{30}} + \frac{B_{9}Q^{34}}{r^{34}}+\frac{r^2}{\ell^2}.
\end{equation}
Using these numerical values for the coupling constants and parameters:
\begin{equation}\label{para_quintuple}
    \begin{split}
        P &= 0.0001391912090,\phantom{a}Q = 5.319760291, \phantom{a}
B_2 = -0.3247933480,\\
B_3 &= 0.3841745537, \phantom{a}
B_4 = -0.4189261389,\phantom{a}
B_5 = 0.3278972499,\\
B_6 &= -0.1613462070, \phantom{a}
B_7 = 0.04255971780, \phantom{a}
B_8 = -0.004407159679,\\
B_9 &= 0.0001534401794.
    \end{split}
\end{equation}
Proceeding as above, we find that the graph of free energy $\mathcal{F}$ against $T$ has eight cusps. The plot was shown in Figure. \ref{fig_quintuple}.

By solving $T'(r_{+}^{(i)}) = 0$ numerically, the cusps are located at:
\begin{equation}
    \begin{split}
        r_{1} &= 0.18064291450\ell, \phantom{ab}r_{2} = 0.18781399271\ell,\\r_{3} &= 0.19976574497\ell, \phantom{ab}r_{4} = 0.218378150812\ell,\\
        r_{5} &= 0.24415812831\ell, \phantom{ab}r_{6} = 0.27830477361\ell,\\
        r_{7} &= 0.31757635308\ell, \phantom{ab}r_{8} = 0.35445618930\ell.
    \end{split}
\end{equation}
According to Theorem \ref{thm_unicusps}, the maximum number of cusps allowed is $N_{f} - 2 = 10$. In this case, we have eight cusps, so the theorem is once again satisfied.

In general, for a given odd value of $N=2n-1$, where $N$ denotes the number of coupling constants $\alpha_{i}$ in the theory, there will be at most an $(n+1)$-critical point \cite{Tavakoli:2022kmo}. This translates to the fact that there will be at most $2n$ cusps.

In this theory, we have that $N_{f} =N+3$. Using Theorem \ref{thm_unicusps}, we can see that the maximum number of cusps is:
\begin{equation}
    N_{f} -2 = N+3 -2 = 2n-1+3-2 = 2n.
\end{equation}
Therefore, Theorem \ref{thm_unicusps} is never violated in this theory.

Therefore, we expect $N=5$ to produce a quadruple point, similar to Figure. \ref{fig_non_lin_EM_main}. In fact, this was done in \cite{Tavakoli:2022kmo}. In this case, $N_{f} = 8$, so the maximum number of cusps is $6$. Since the quadruple point consists of six cusps, Theorem \ref{thm_unicusps} is saturated.

\begin{figure}
    \centering
    \includegraphics[scale=0.80]{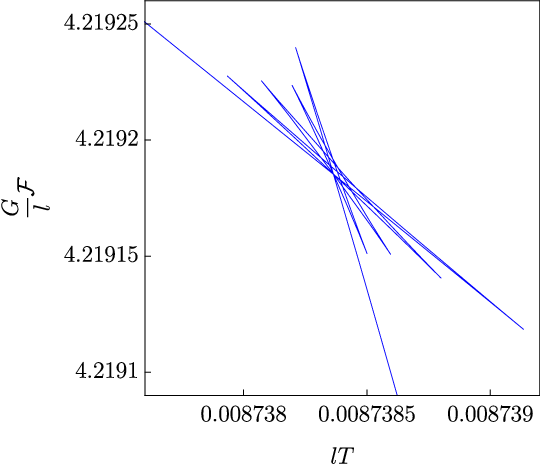}
    \caption{Graph of free energy $\mathcal{F}$ against temperature $T$ for the black hole in non-linear Einstein-Maxwell theory with eight cusps. The parameters used to plot these curves are defined in Eq. \Eqref{para_quintuple}.}
    \label{fig_quintuple}
\end{figure}

\section{The grand canonical ensemble} \label{sec_gc}

All the cases considered in the sections above have been studied in canonical (i.e., fixed-charge) ensemble where the relevant thermodynamic potential is the free energy $\mathcal{F}=M-TS$. Then the system is described by a single variable $r_+$ which leads to a single-variable polynoimal $P(x)$ for which we apply Theorem \ref{thm_unicusps}. In the grand canonical ensemble, the charge $q$ is no longer fixed and now plays the role of a second variable after $r_+$. In this case, the electric potential is fixed at the boundary and the relevant thermodynamic potential is \emph{grand potential} $\mathcal{W}=M-TS-\Phi Q$ \cite{Chamblin:1999tk,Chamblin:1999hg,Ghosh:2021uxg}, and the system is now described by two variables $r_+$ and $q$.

In the case of the grand canonical ensemble, Theorem \ref{thm_unicusps} is no longer applicable, and we do not have a bound on the number of cusps. But nevertheless it is still possible to show that this system can still be described by an $A$-discriminant, this time for a bivariate polynomial. Let us demonstrate this by working concretely on the Reissner--Nordstr\"{o}m-AdS black hole in the fixed potential ensemble, where the mass $M$, entropy $S$, and charge $Q$ is given in Eq.~\Eqref{RNAdS_quantities}. As before we let $T$ and $\Phi$ be the temperature and electric potential of the black hole. 

Following the procedure analogous to Sec.~\ref{sec_thermo}, we consider the function $\mathcal{P}=\mathcal{W}-M+TS+\Phi Q$, which vanishes whenever $\mathcal{W}$ takes the value of the grand potential of the system,
\begin{align}
 \mathcal{P}=\mathcal{W}-M+TS+\Phi Q=0.\label{GC_condition1}
\end{align}
Now consider a small change of the two parameters 
\begin{align}
 r_+\rightarrow r_++\delta r_+,\quad q\rightarrow q+\delta q. \label{RNAdS-gc_perturb}
\end{align}
Then the mass, entropy, and charge change according to 
\begin{align}
 M\rightarrow M+\delta M,\quad S\rightarrow S+\delta S,\quad Q\rightarrow Q+\delta Q,
\end{align}
where now
\begin{align}
 \delta M=\frac{\partial M}{\partial r_+}\delta r_++\frac{\partial M}{\partial q}\delta q,\quad \delta S=\frac{\partial S}{\partial r_+}\delta r_+,\quad \delta Q=\frac{\partial Q}{\partial q}\delta q.
\end{align}
In the grand canonical ensemble, we have an additional thermodynamic extensive variable $Q$ and the corresponding first law \cite{Chamblin:1999tk,Chamblin:1999hg}
\begin{align}
 \delta M=T\delta S+\Phi\delta Q
\end{align}
is obeyed. We then rearrange the first law $\delta\mathcal{W}=\delta M+T\delta S+\Phi\delta Q$ as
\begin{align}
 \delta\brac{\mathcal{W}-M+TS+\Phi Q}=\delta\mathcal{P}&=0\nonumber\\
 \frac{\partial\mathcal{P}}{\partial r_+}\delta r_++\frac{\partial\mathcal{P}}{\partial q}\delta q&=0\label{GC_condition2}.
\end{align}
In other words, if we take $\mathcal{P}$ to be a function of two variables $r_+$ and $q$, conditions \Eqref{GC_condition1} and \Eqref{GC_condition2} lead to
\begin{align}
 \mathcal{P}=\frac{\partial\mathcal{P}}{\partial r_+}=\frac{\partial\mathcal{P}}{\partial q}=0.\label{GC_3}
\end{align}
If $\mathcal{P}=P/H$ is a rational function where $P$ is some polynomial in $r_+$ and $q$, then this is equivalent to
\begin{align}
 P=\frac{\partial P}{\partial r_+}=\frac{\partial P}{\partial q}=0,
\end{align}
which defines the $A$-discriminant for a two-variable polynomial $P$.

We now write out $\mathcal{P}=\mathcal{W}-M+TS+\Phi Q$ explicitly for the spherical Reissner--Nordstr\"{o}m-AdS black hole in the grand canonical ensemble. Using Eq.~\Eqref{RNAdS_quantities}, we have 
\begin{align}
 \frac{16\pi G}{\Omega\ell^{d-3}}x^{d-3}\mathcal{P}\equiv P(x,y)&=bx^{d-3}-(d-2)\brac{x^{2d-6}+x^{2d-4}+y^2}+ax^{2d-5}+cyx^{d-3},
\end{align}
where we are using dimensionless variables $x=r_+/\ell$ and $y=q/\ell^{d-3}$, along with 
\begin{align}
 a=4\pi\ell T,\quad b=\frac{16\pi G}{\Omega\ell^{d-3}}\mathcal{W},\quad c=2\sqrt{2(d-2)(d-3)}\,\Phi.
\end{align}
Here $P(x,y)$ is a two-variable polynomial consisting of six($\neq5$) monomials. So neither Theorem \ref{thm_korben} nor \ref{thm_unicusps} is applicable here, and the bounds of the number of cusps is not known. In any case, for this particular example of the Reissner--Nordstr\"{o}m-AdS black hole, we solve $P=\frac{\partial P}{\partial x}=\frac{\partial P}{\partial y}=0$ for $a$, $b$, and $c$. This leads to the parametric equations of the $A$-discriminant in terms of the roots $(x,y)$ 
\begin{align*}
 a&=\frac{1}{x}\sbrac{d-3+(d-1)x^2-(d-3)\frac{y^2}{x^{2d-6}}},\quad b=x^{d-3}-x^{d-1}-\frac{y^2}{x^{d-3}},\quad c=2(d-2)\frac{y}{x^{d-3}}.
\end{align*}
Restoring dimensionful variables, we recover 
\begin{subequations}
\begin{align}
 T&=\frac{1}{4\pi r_+}\sbrac{d-3+(d-1)\frac{r_+^2}{\ell^2}-(d-3)\frac{q^2}{r_+^{2d-6}}},\\
 \mathcal{W}&=\frac{\Omega}{16\pi G}\brac{r_+^{d-3}-\frac{r_+^{d-1}}{\ell^2}-\frac{q^2}{r_+^{d-3}}},\\
 \Phi&=\sqrt{\frac{d-2}{2(d-3)}}\,\frac{q}{r_+^{d-3}},
\end{align}
\end{subequations}
thus recovering the correct expressions for temperature, grand potential, and electric potential in agreement with \cite{Chamblin:1999tk}. In this case, the $A$-discriminant is a two-dimensional hypersurface in the three dimensional space spanned by $(T,\mathcal{W},\Phi)$. To obtain a one-dimensional curve such as in Fig.~4 of \cite{Chamblin:1999tk}, we express $T$ and $\mathcal{W}$ in terms of the potential $\Phi$ by eliminating $q$, giving
\begin{subequations}
\begin{align}
 T&=\frac{1}{4\pi r_+}\sbrac{(d-3)\brac{1-C^2\Phi^2}+(d-1)\frac{r_+^2}{\ell^2}},\\
 \mathcal{W}&=\frac{\Omega}{16\pi G}\sbrac{\brac{1-C^2\Phi^2}r_+^{d-3}-\frac{r_+^{d-1}}{\ell^2}},
\end{align}
\end{subequations}
where $C=\sqrt{\frac{2(d-3)}{d-2}}$. This recovers Eqs.~(22) and (24) of \cite{Chamblin:1999tk}. For example, the curve for $d=4$ and $\Phi=0.7$ is plotted in Fig.~\ref{fig_RNAdS-gc}. Here we have two branches of solutions separated at $r_1=\sqrt{\frac{1-\Phi^2}{3}}$ which is a cusp. Threfore, this single cusps separates the system into two branches, or phases. In Fig.~\ref{fig_RNAdS-gc}, branch \textsf{A} is $r_1<r_+<\infty$ and branch \textsf{B} is $0<r_+<r_1$.

\begin{figure}
 \centering 
 \includegraphics{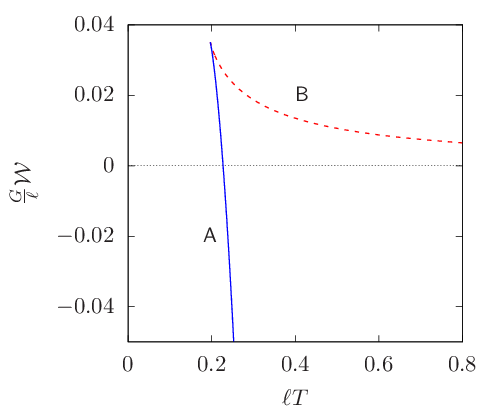}
 \caption{Graph of the grand canonical potential $\mathcal{W}$ vs temperature $T$ for the Reissner--Nordstr\"{o}m-AdS black hole in the fixed potential ensemble, for $d=4$, $\Phi=0.7$. Branch \textsf{A} corresponds to $r_1<r_+<\infty$ and branch \textsf{B} corresponds to $0<r_+<r_1$.}
 \label{fig_RNAdS-gc}
\end{figure}

\section{Conclusion} \label{sec_conclusion}

In this paper, we have studied the cusps of the free energy ($\mathcal{F}$) vs temperature ($T$) curves of various black holes from the perspective of $A$-discriminants. The number of possible distinct thermodynamic phases depends on the number of terms in its function $f(r)=-g_{tt}$, where we consider static black holes of spherical, planar, or hyperbolic symmetry. In particular, if $f(r)$ consists of $N_f$ distinct terms, then its $\mathcal{F}$-$T$ curve has at most $N_f-2$ cusps. This means that there are at most $N_f-1$ distinct thermodynamic phases. 

Note that we have considered the canonical ensemble, where the relevant thermodynamic potential is $\mathcal{F}=M-TS$, and the states are parametrised by a single variable, namely the horizon radius $r_+$. In terms of $A$-discriminants, this translates to considering univariate polynomials where we have established Theorem \ref{thm_unicusps} which gives an upper bound of the number of phases of a given system.

As a final remark, let us see how this present formulation of $A$-discriminants can be connected to other mathematical viewpoints of (gravitational) thermodynamics. It is interesting to note that recently, Refs.~\cite{Xu:2023vyj,Wang:2023qxw} have also applied a mathematical perspective to the phase transitions of black holes, albeit from a different direction. In their case, they considered Riemann surfaces, where it is the winding number that determines the nature of thermodynamic phases. Since this paper essentially studies the same systems using $A$-discriminants, it is tempting to consider the possibility that there might be some connection (or even correspondence) between the winding number of Riemann surfaces and cusps of $A$-discriminants. Besides this, there are other formulations of thermodynamics, such as contact geometry \cite{Rajeev:2007uk,Ghosh:2019rsu,Ghosh:2023khd} or the Riemannian geometry of thermodynamic phase space \cite{Weinhold:1975xej,Ruppeiner:1979bcp,Quevedo:2006xk,Aragon-Munoz:2021hbb}. It might be interesting to explore how an algebraic object like an $A$-discriminant might relate to these geometric settings.

\section*{Acknowledgments}
 
M.~N. is supported by Xiamen University Malaysia Research Fund (Grant No. XMUMRF/2020-C5/IMAT/0013). Y.-K.~L is supported by Xiamen University Malaysia Research Fund (Grant no. XMUMRF/2021-C8/IPHY/0001). L.C. is partially supported by Trinity College, Cambridge via the Rouse Ball Travelling Studentship in Mathematics. Special thanks to Jerry Wu for his invaluable assistance in plotting Figs.~\ref{fig_non_lin_EM_main} and \ref{fig_quintuple}.

\bibliographystyle{BHdiscrim}

\bibliography{BHdiscrim}

\end{document}